\patchcmd{\SetProgSty}{ArgSty}{ProgSty}{}{}
\newtheorem{invariant}{Invariant}
\newcommand{\Points}{\mathcal{P}}
\newcommand{\Tree}{\mathcal{T}}
\newcommand{\Hull}{\mathcal{H}}
\DeclareMathOperator*{\argmax}{arg\,max}
\begin{document}
	
\title{Online computation of normalized substring complexity}

\author{Gregory Kucherov\inst{1}\orcidID{0000-0001-5899-5424} \and
	Yakov Nekrich\inst{2}\orcidID{0000-0003-3771-5088} 
}

\institute{
	LIGM, CNRS, Universit\'e Gustave Eiffel, Marne-la-Vall\'ee, France, \email{Gregory.Kucherov@univ-eiffel.fr}
	\and
	Michigan Technological University, USA, \email{yakov@mtu.edu}
}

\maketitle

\begin{abstract}
	The normalized substring complexity $\delta$ of a string is defined as $\max_k \{c[k]/k\}$, where $c[k]$ is the number of \textit{distinct} substrings of length $k$. This simply defined measure has recently attracted attention due to its established relationship to popular string compression algorithms. 
	We consider the problem of computing $\delta$ online, when the string is provided from a stream. 
	We present two algorithms solving the problem: one working in $O(\log n)$ amortized time per character, and the other in $O(\log^3 n)$ worst-case time per character. To our knowledge, this is the first polylog-time online solution to this problem. 
	\keywords{compression \and data structures \and online algorithms \and normalized substring complexity }
\end{abstract}

\section{Introduction}
\label{sec:intro}
Compressibility of a string is a classical way to evaluate its \textit{complexity}. Compressibility has also other applications such as measuring similarity of two strings, where the similarity is expressed by how well one of the strings can be compressed if the other is provided as a reference \cite{DBLP:journals/tit/CilibrasiV05}. The ultimate compressibility measure is provided by Kolmogorov complexity which, however, is not practical in any way: it is not computable and is defined up to an additive constant and therefore is not meaningful for an individual finite string. A common practical approach is then to apply conventional string compressors based on efficient compression/decompression algorithms. 

Dictionary-based compressors is a class of compression algorithms trying to exploit repeated substrings occurring in the string. These include such popular families as Lempel-Ziv compressors, compressors based on the Burrows-Wheeler transform (BWT), grammar-based compressors, and more (see e.g. \cite{DBLP:journals/tit/KociumakaNP23}). 
A line of research 
focuses then on relationships between these algorithms with respect to the compressibility measures they provide \cite{10.1145/382780.382782,9317909,10.1145/3434399,10.1007/978-3-319-77404-6_36,DBLP:journals/tit/KociumakaNP23}. It is known, for example, that Lempel-Ziv (LZ77) compression of a string and the run-length encoding of its BWT image are related in resulting size by a $\mathrm{polylog}(n)$ factor ($n$ string length) \cite{9317909}. 

In this context, a natural measure of repetition-based compressibility is based on \textit{normalized substring complexity} defined as $\delta(w)=\max_{k\geq 1}\{c_w[k]/k\}$ where $c_w[k]$ is the number of \textit{distinct} subwords of length $k$ occurring in $w$. In relation to compression, $\delta$ was first studied in \cite{DBLP:journals/algorithmica/RaskhodnikovaRRS13} and later in \cite{10.1145/3426473,DBLP:journals/tit/KociumakaNP23,bernardini_et_al:LIPIcs.ISAAC.2023.12,DBLP:conf/dcc/BeckerCCKKP24}.  It has been shown that the compressibility measures provided by all the above-mentioned compression algorithms are bounded between $\delta$ and $\delta \cdot\mathrm{polylog}(n)$ \cite{DBLP:journals/tit/KociumakaNP23,9317909}. In practice, on genomic data, $\delta$ measure has been observed to be essentially the same as those based on LZ77 or BWT \cite{Bonnie2023.02.02.526837}. Furthermore, $\delta$ can be computed in $O(n)$ time and space by a simple and elegant algorithm based on suffix trees \cite{10.1145/3426473}. Finally, $\delta$ has a number of other attractive properties: it is invariant under string reversal or alphabet permutations, monotonically increasing under string concatenation, and sub-additive under the union of all substrings \cite{DBLP:conf/dcc/BeckerCCKKP24}. 

On the other hand, \textit{subword complexity}, i.e. counts of distinct subwords as a function of their length, is one of the established subjects in word combinatorics \cite{DBLP:journals/eatcs/BerstelK03}, and we will use some of these results in the present work. 

In this paper, we study the problem of computing $\delta$ in the \textit{online} mode. In light of the useful properties of $\delta$, it is important to be able to compute it online, especially for large input strings. Note that online implementation of compression algorithms is an extensively studied issue. In particular, a number of works has been done on efficient computation of Lempel-Ziv parsing, including online computation \cite{10.1007/978-3-540-87744-8_58,10.1007/978-3-642-32589-2_68,yamamoto_et_al:LIPIcs.STACS.2014.675,10.1007/978-3-319-23826-5_2}. Since $\delta$ is a good estimator of the Lempel-Ziv compression measure (cf above), maintaining its exact value of online is an interesting problem which, to our knowledge, has not been studied before. A related work includes \cite{DBLP:conf/dcc/BeckerCCKKP24} where a probabilistic sketch-based streaming algorithm for computing an \textit{approximation} of $\delta$ has been studied. Another related work is \cite{DBLP:journals/algorithmica/RaskhodnikovaRRS13} which focuses on a probabilistic \textit{sublinear} (non-online) algorithm for approximating $\delta$. 
In this paper, we propose two algorithms for computing the exact value of $\delta$ online: one working in amortized $O(\log n)$ time and another in worst-case $O(\log^3 n)$ time per character. 

\section{Preliminaries}
We index string positions from $1$, i.e. assume an input string $w=w[1]\ldots w[n]$ over an alphabet of size $\sigma$. \textit{Substring complexity} or \textit{substring count} $c_w[k]$, $k\leq n$ is defined to be the number of distinct substrings of length $k$ occurring in $w$. The \textit{normalized substring complexity} of $w$ is defined by $\delta(w)= \max_{1\leq k\leq n}\{c_w[k]/k\}$. We also denote $\widetilde{k}_w$ to be the largest $k$ for which the maximum is reached, i.e. the largest $k'$ s.t. $c_w[k']/k' = \max_k\{c_w[k]/k\}$. For brevity, we will simply write $\delta$ and $\widetilde{k}$ if $w$ is clear from the context. 

As a warm-up, we prove the following statement.
\begin{lemma}
	Let $\sigma\geq 2$. For every $w$, $n>3$, it holds $\delta\leq n/\log_\sigma n$. 
	\label{lemma1}
\end{lemma}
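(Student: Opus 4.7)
The plan is to combine two elementary upper bounds on $c_w[k]$: the positional bound $c_w[k]\leq n-k+1$ (there are only $n-k+1$ starting positions for a length-$k$ factor in $w$) and the combinatorial bound $c_w[k]\leq \sigma^k$ (there are only $\sigma^k$ length-$k$ strings over an alphabet of size $\sigma$). These two bounds balance near $k=\log_\sigma n$, where both equal roughly $n$, so the ratio $c_w[k]/k$ is controlled by $n/\log_\sigma n$ at the balance point. I would therefore split the range of $k$ at $\log_\sigma n$ and handle each regime with the appropriate bound.

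First, in the ``long'' regime $k\geq \log_\sigma n$ I would use the positional bound: $c_w[k]/k \leq (n-k+1)/k \leq n/k \leq n/\log_\sigma n$. Next, for the ``short'' regime $2\leq k\leq \log_\sigma n$, I would use the combinatorial bound and check that $f(k)=\sigma^k/k$ is non-decreasing there. A routine calculus check gives $f'(k)>0$ as soon as $k>1/\ln\sigma$, and $\sigma\geq 2$ forces $1/\ln\sigma<2$, so monotonicity holds for every integer $k\geq 2$; hence $f(k)\leq f(\log_\sigma n)=n/\log_\sigma n$ throughout the range.

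The remaining case $k=1$ is slightly delicate and is where the hypothesis $|w|>3$ enters. Here $c_w[1]\leq \min(\sigma,n)$, so the claim reduces to $\min(\sigma,n)\leq n/\log_\sigma n$. When $\sigma>n$, one has $\log_\sigma n<1$ and the inequality is immediate. When $\sigma\leq n$, it rearranges to $n^\sigma\leq \sigma^n$. For $\sigma\geq 3>e$ this follows from the standard fact that $x\mapsto (\ln x)/x$ is decreasing on $[e,\infty)$; for $\sigma=2$ it becomes $n^2\leq 2^n$, which holds for $n\geq 4$ (with equality at $n=4$). This boundary case is the only real obstacle, and it pinpoints why the lemma requires $|w|>3$: at $n=3$, $\sigma=2$ one already has $n/\log_\sigma n=3/\log_2 3<2=c_w[1]$ achievable (e.g.\ $w=aba$), so the bound narrowly fails there.
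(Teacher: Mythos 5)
Your proof is correct, and it takes a genuinely different route from the paper's. The paper works only with the single index $\widetilde{k}=\argmax_k\{c_w[k]/k\}$: from $c_w[\widetilde{k}]\le\sigma^{\widetilde{k}}$ it extracts $\widetilde{k}\ge\log_\sigma c_w[\widetilde{k}]$, bounds $\delta=c_w[\widetilde{k}]/\widetilde{k}\le c_w[\widetilde{k}]/\log_\sigma c_w[\widetilde{k}]$, and then transfers this to $n/\log_\sigma n$ via monotonicity of $x/\log_\sigma x$ on $[3,\infty)$, cleaning up the degenerate cases $c_w[\widetilde{k}]\in\{1,2\}$ at the end. You instead bound $c_w[k]/k$ uniformly for all $k$ by splitting at the ``balance point'' $k=\log_\sigma n$: the positional bound $c_w[k]\le n-k+1$ handles large $k$, the combinatorial bound $c_w[k]\le\sigma^k$ together with monotonicity of $\sigma^k/k$ handles $2\le k\le\log_\sigma n$, and $k=1$ is a separate check reducing to $n^\sigma\le\sigma^n$. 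The two proofs thus place the case split in different places (you split on $k$, the paper splits on the size of $c_w[\widetilde{k}]$), and accordingly the hypothesis $|w|>3$ enters through different doors: for you it is needed to secure $n^2\le 2^n$ at $k=1$, $\sigma=2$; for the paper it is needed so that $\delta=2$ still fits under $n/\log_2 n$. Your version is a bit longer in the $k=1$ bookkeeping but arguably more self-contained, as it never has to reason about where the argmax falls; the paper's is more compact. Both are complete and correct.
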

\begin{proof}
	Clearly, $c_w[k]\leq \min\{n,\sigma^k\}$ for all $k$. In particular, we have $\widetilde{k} \geq \log_\sigma c_w[\widetilde{k}]$. Then 
	$c_w[\widetilde{k}]/\widetilde{k} \leq c_w[\widetilde{k}]/\log_\sigma c_w[\widetilde{k}]\leq n/\log_\sigma n$ because $c_w[\widetilde{k}]\leq n$ and the function $x/\log_\sigma x$ is increasing 
	for $x\geq 3$. 
	The remaining cases are $c_w[\widetilde{k}]= 2$ and $c_w[\widetilde{k}]= 1$. $c_w[\widetilde{k}]= 2$ can only happen for $\sigma=2$ and $\widetilde{k}=1$, in which case $\delta=2\leq n/\log_2 n$ except for $n=3$. Finally, $c_w[\widetilde{k}]= 1$ can only happen for the unary alphabet. 
\end{proof}
Note that a slightly weaker inequality than Lemma~\ref{lemma1} follows from already known results relating $\delta$ to the number of phrases $z$ in LZ77 factorization. Namely, it is known that $\delta \leq z$ \cite{DBLP:journals/algorithmica/RaskhodnikovaRRS13} and $z\leq n/(\log_\sigma n-2-\log_\sigma(1+\log_\sigma n))$ shown in the original paper \cite{DBLP:journals/tit/LempelZ76}.

\section{Structure of the substring count array}
\label{sec:structure}
Given a string $w$ of length $n$, consider the array $c_w[1..n]$ of substring counts. 
It has a peculiar structure that we summarize in this section. 
To introduce these results, we need some definitions. 

For an occurrence of a substring $u=w[i..j]$, we call $w[i-1]$ its \textit{left context} and $w[j+1]$ its \textit{right context}. Furthermore, a substring $u$ of $w$ is called \textit{right-special} if $u$ occurs in $w$ with two distinct right contexts, i.e. $ux$ and $uy$ both occur in $w$ where $x$ and $y$ are two distinct letters. Similarly, a substring is \textit{left-special} if it occurs with two distinct left contexts. Finally, 
a substring is \textit{bispecial} if it is both \textit{right-special} and \textit{left-special}. It is known that substring complexity is closely related to special substrings \cite{Cassaigne_Nicolas_2010}. 

Consider a string $w$. Let $\alpha(w)$ be the length of the shortest suffix of $w$ that does not occur earlier in $w$. That is, substring $w[n-\alpha(w)+1..n]$ occurs only once in $w$. Let $\beta(w)$ be the smallest integer $\ell$ such that no length-$\ell$-substring of $w$  is right-special. In other words, $\beta(w)$ is one more than the maximal length of a right-special factor of $w$. Observe that $\alpha(w)>0$ and $\beta(w)>0$. 

\begin{example}
	For $w=\mathsf{abaabbabbab}$, $\alpha(w)=6$ and $\beta(w)=3$. 
	\label{ex1}
\end{example}

The structure of the array $c_w$ is summarized in the following theorem based on results from \cite{DBLP:journals/tcs/Luca99,leve2001proof}. 
\begin{theorem}[\cite{DBLP:journals/tcs/Luca99,leve2001proof}]
	Let $w$ be a string. The array $c_w[1..n]$ can be decomposed into three intervals $[1..L]$, $[L..R]$ and $[R..n]$ with the following properties:
	\begin{enumerate}[label=(\roman*)]
		\item counts $c_w[1..L]$ are strictly increasing, $c_w[k]<c_w[k+1]$ for $1\le k<L$\label{c1}
		\item counts $c_w[L..R]$ have equal values, $c_w[k]=c_w[k+1]$ for $L\le k\le R$ \label{c2}
		\item counts $c_w[R..n]$ are strictly decreasing so that $c_w[k+1]=c_w[k]-1$ and $c_w[n]=1$, \label{c3}
		\item if $\alpha(w)\geq \beta(w)$, then $L=\beta(w)$ and $R=\alpha(w)$, \label{alpha=R}
		\item  if $\alpha(w)\leq \beta(w)$, then $L\geq\alpha(w)$ and $R=\beta(w)$.\label{alpha<=L}
	\end{enumerate}
	\label{th1}
\end{theorem}
\begin{example}
	To continue Example~\ref{ex1}, for $w=\mathsf{abaabbabbab}$, the count array $c_w$ is $\mathsf{24666654321}$. Here $L=\beta(w)=3$ and $R=\alpha(w)=6$. If $\mathsf{a}$ is appended to $w$, then it becomes $\alpha(w\mathsf{a})=4$, $\beta(w\mathsf{a})=6$, and the array $c_{w\mathsf{a}}$ is $\mathsf{246777654321}$, so $L$ and $R$ are respectively $4$ and $6$ as well. Alternatively, if $\mathsf{b}$ is appended to $w$, then $\beta(w\mathsf{b})=3$, $\alpha(w\mathsf{b})=7$ and the array $c_{w\mathsf{c}}$ becomes $\mathsf{246666654321}$. Finally, if a new letter $\mathsf{c}$ is appended to $w$, then $\alpha(w\mathsf{c})=1$, $\beta(w\mathsf{c})=6$, and $c_{w\mathsf{c}}$ is $\mathsf{357777654321}$ with $L=3$ and $R=6$. 
	\label{ex2}
\end{example}

Theorem~\ref{th1} implies that the maximum of $c_w[k]/k$ can be reached only on interval $[1..L]$, that is $\widetilde{k}\in [1..L]$, as for $k>L$, we have $c_w[L]/L$>$c_w[k]/k$. 

\section{Updating the substring count array online}
\label{sec:updating}
Let us now focus on how the array $c_w$  can be modified when a letter $x$ is appended to $w$ on the right. Let $c_w[1..n]$ be the substring count array of $w$. We are interested in $c_{wx}[1..n+1]$. 
The following Lemma specifies the updates. 

\begin{lemma}
	Consider $\alpha(wx)$. For $k\in [1..\alpha(wx)-1]$, we have $c_{wx}[k]=c_{w}[k]$ . For $k\in [\alpha(wx)..n]$, we have $c_{wx}[k]=c_{w}[k]+1$, and $c_{wx}[n+1]=1$. 
	\label{lem1}
\end{lemma}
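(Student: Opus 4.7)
The plan is to observe that the set of substrings of $wx$ is obtained from the set of substrings of $w$ by adding at most one new substring per length, namely the suffix of $wx$ of that length. Indeed, any substring of $wx$ that is not a substring of $w$ must end at position $n+1$, hence must be a suffix of $wx$. Conversely, the suffix of $wx$ of length $k$ may or may not already occur in $w$. This gives $c_{wx}[k]\in\{c_w[k],c_w[k]+1\}$ for $k\le n$, and $c_{wx}[n+1]=1$ trivially since $wx$ is the unique length-$(n{+}1)$ substring.

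Next I would rewrite the alternative in terms of multiplicity inside $wx$. The length-$k$ suffix $u=wx[n+2-k..n+1]$ occurs in $w$ if and only if it has another occurrence in $wx$ ending at some position $\le n$; equivalently, $u$ occurs in $wx$ at least twice. Hence $c_{wx}[k]=c_w[k]+1$ exactly when the length-$k$ suffix of $wx$ occurs only once in $wx$.

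The third step is a simple monotonicity argument on suffix uniqueness. If the length-$k$ suffix of $wx$ occurs uniquely in $wx$, then so does every longer suffix of $wx$, because any occurrence of the longer suffix would force a second occurrence of the shorter one by taking its last $k$ characters. Combined with the definition of $\alpha(wx)$ as the shortest suffix length with a unique occurrence, this yields: the length-$k$ suffix is unique in $wx$ iff $k\ge\alpha(wx)$. Substituting into the case analysis above gives $c_{wx}[k]=c_w[k]$ for $k<\alpha(wx)$ and $c_{wx}[k]=c_w[k]+1$ for $\alpha(wx)\le k\le n$, as claimed.

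There is no real obstacle here; the only point that requires a moment of care is the equivalence between ``occurs in $w$'' and ``has more than one occurrence in $wx$'' for a suffix of $wx$, which is why I would state it explicitly rather than fold it into the monotonicity step.
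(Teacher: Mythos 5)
Your proof is correct and follows essentially the same approach as the paper: identify the new substrings of $wx$ with the suffixes of $wx$, and use the definition of $\alpha(wx)$ as the threshold between suffixes that already occur and those that do not. The only difference is that you make explicit the monotonicity argument (a unique length-$k$ suffix forces all longer suffixes to be unique), which the paper leaves implicit.
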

\begin{proof}Appending $x$ to $w$ introduces $n+1$ potentially new substrings which are suffixes of $wx$ of length $1$ to $n+1$. By definition, $\alpha(wx)$ is the length of the shortest suffix that has no other occurrences in $wx$. Therefore, the shorter suffixes of length $1$ to $\alpha(wx)-1$ are already present and the counts for $k=1..\alpha(wx)-1$ remain unchanged. For every $k\in [\alpha(wx)..n+1]$, exactly one new substring is  introduced. \end{proof}

Thus, appending a letter $x$ to $w$  implies incrementing by $1$ entries of $c_{w}$ starting from $\alpha(wx)$ to the end of the array. 

Assume now that input string $w[1..n]$ is provided online. We define $\alpha_i=\alpha(w[1..i])$. 
The following lemma clarifies how index $\alpha_{i+1}$ relates to $\alpha_i$ and to the intervals defined in Theorem~\ref*{th1}.

\begin{lemma}
	\begin{enumerate}[label=(\roman*)]
		\item $\alpha_{i+1}\leq \alpha_i+1$, \label{claim1}
		\item Let $L,R$ be  interval borders of $c_{w[1..i]}$ as defined in Theorem~\ref*{th1}. Then either $\alpha_{i+1}\leq L+1$ or $\alpha_{i+1}=R+1$ (or both if $L=R$). \label{claim2}
	\end{enumerate}
	\label{lem2}
\end{lemma}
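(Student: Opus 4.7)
The plan is to handle the two parts in sequence, with part~\ref{claim2} reducing to a small structural argument on the count array.

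For part~\ref{claim1}, I will exhibit a length-$(\alpha_i+1)$ suffix of $wx$ that is unique in $wx$. The natural candidate is $w[i-\alpha_i+1..i]\cdot x$, whose length-$\alpha_i$ prefix is exactly the unique length-$\alpha_i$ suffix of $w$. Any other occurrence of this candidate in $wx$ would end at some position $p\leq i$, and its length-$\alpha_i$ prefix would then provide a second occurrence of $w[i-\alpha_i+1..i]$ inside $w$, contradicting the definition of $\alpha_i$. Hence $\alpha_{i+1}\leq \alpha_i+1$.

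For part~\ref{claim2}, I will split on the relation between $\alpha_i$ and $\beta_i$ provided by Theorem~\ref{th1}. In the case $\alpha_i\leq\beta_i$, clause~(v) gives $L\geq\alpha_i$, and part~\ref{claim1} immediately yields $\alpha_{i+1}\leq\alpha_i+1\leq L+1$. In the case $\alpha_i>\beta_i$, clause~(iv) gives $L=\beta_i$ and $R=\alpha_i$, and part~\ref{claim1} gives $\alpha_{i+1}\leq R+1$. If $\alpha_{i+1}\leq L+1$ we are done, so it only remains to rule out $L+2\leq\alpha_{i+1}\leq R$.

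To obtain a contradiction in this remaining subcase, I will apply Lemma~\ref{lem1} to compute $c_{wx}$ on the original plateau $[L..R]$ of $c_w$, whose common value I denote $P$. Entries indexed $L,\dots,\alpha_{i+1}-1$ remain equal to $P$, while entries indexed $\alpha_{i+1},\dots,R$ become $P+1$. Because $\alpha_{i+1}\geq L+2$, the first block contains at least two consecutive equal entries $c_{wx}[L]=c_{wx}[L+1]=P$, and these are followed later by the strictly larger value $c_{wx}[\alpha_{i+1}]=P+1$. Such a shape, a non-trivial flat stretch followed by a strict increase, is incompatible with the increase-plateau-decrease structure that Theorem~\ref{th1} prescribes for $c_{wx}$, giving the desired contradiction and forcing $\alpha_{i+1}=R+1$.

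The main obstacle is this final structural argument: everything hinges on the update rule of Lemma~\ref{lem1} creating two distinct flat levels in $c_{wx}$ separated by a jump of $+1$, and on verifying that this violates Theorem~\ref{th1} in a non-borderline way. That is precisely why the assumption $\alpha_{i+1}\geq L+2$ (rather than the weaker $\alpha_{i+1}\geq L+1$) is essential. The edge case $L=R$ never reaches this subcase, since then $L+2>R+1\geq\alpha_{i+1}$ by part~\ref{claim1}, consistent with the ``or both if $L=R$'' clause in the statement.
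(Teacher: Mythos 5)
Your proof is correct and matches the paper's approach: part~(i) follows from the definition of $\alpha$, and part~(ii) rules out $L+1 < \alpha_{i+1} \leq R$ by observing (via Lemma~\ref{lem1}) that the updated array $c_{w[1..i+1]}$ would then contain a plateau of length at least $2$ followed by a strict jump of $+1$, contradicting the increase--plateau--decrease shape of Theorem~\ref{th1}. Your case split on $\alpha_i$ versus $\beta_i$ is harmless but unnecessary: part~(i) already gives $\alpha_{i+1} \leq R+1$ in all cases (since $R \geq \alpha_i$ by Theorem~\ref{th1}), and the plateau argument rules out $[L+2,R]$ uniformly.
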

\begin{proof} \ref{claim1} follows from the definition of $\alpha$, as the shortest non-repeated suffix of $w[1..i+1]$ cannot be more than by $1$ longer than that of $w[1..i]$. To show \ref{claim2}, first recall that by Theorem~\ref{th1}\ref{alpha<=L},\ref{alpha=R}, we have either $\alpha_i\leq L$ or $\alpha_i=R$ (or both, if $L=R$). If $\alpha_i\leq L$, then $\alpha_{i+1}\leq L+1$ follows from \ref{claim1}. Consider the case $\alpha_i = R$. 
	$L+1< \alpha_{i+1}\leq R$ would contradict the structure of the array $c_{w[1..i+1]}$ implied by Theorem~\ref*{th1}, namely that $c_{w[1..i+1]}$ must strictly increase on the first interval (Theorem~\ref*{th1}\ref{c1}). Therefore, we must have $\alpha_{i+1}=R+1$ or $\alpha_{i+1}\leq L+1$. \end{proof}
Observe that the case $\alpha_{i+1}= R+1$ extends the constant interval (\ref{c2} of Theorem~\ref*{th1}) by $1$ (i.e. $R$ gets incremented by $1$), the case $\alpha_{i+1}= L+1, L<R$ shortens the constant interval by $1$ ($L$ gets incremented by $1$), and the remaining cases keep $L$ and $R$ unchanged. The three cases are illustrated in Example~\ref{ex2}. 

From now on, we assume that string $w$ is provided from a stream, and we drop the index $w$ from $c_w$. We ``truncate'' array $c$ to contain only the increasing and constant intervals (see Theorem~\ref{th1}), that is, at each step $i$, we represent $c$ by $c[1..R]$ for the current value of $R$. It is easily seen that the descending ``tail'' $c[R+1..i]$ 
does not need to be stored to maintain $c$. 
From the above, it follows that at step $i$, $c[1..R]$ undergoes an update of one of two types: either all entries of $c[1..R]$ starting from position $\alpha_{i}$ are incremented by $1$ (see Lemma~\ref{lem1}), or $c$ is extended by setting $R \leftarrow R+1$ and $c[R]\leftarrow c[R-1]$. The latter corresponds to extending the constant interval, as explained above. 
To summarize, the algorithm for updating $c[1..R]$ after appending a next character to the current string $w[1..i-1]$ can be stated as follows:

\begin{algorithm}[H]
	\algo{\textsc{update}($c[1..R]$,$w[i]$)}{
		compute $\alpha_{i}=\alpha(w[1..i])$\;
		\eIf{$\alpha_{i}=R+1$}{$R\leftarrow R+1$; $c[R]\leftarrow c[R-1]$}
		{\For{$k\leftarrow \alpha_{i}$ \KwTo $R$}{$c[k]\leftarrow c[k]+1$}}
	}
\end{algorithm}

In the rest of this section, we focus on properties of $\alpha_i$'s over the execution of \textsc{update}. According to Lemma~\ref{lem2}(i), at each step $i$, either $\alpha_{i}=\alpha_{i-1}+1$, i.e. $\alpha_{i-1}$ is incremented by $1$, or $\alpha_i\leq \alpha_{i-1}$. We call the first case an \textit{increment} and the second a \textit{pullback}. The following lemma is immediate, it states that the sum of all pullback distances is bounded by $n$. 
\begin{lemma}
	$\sum_{i\,|\,\alpha_{i}\leq \alpha_{i-1}}(\alpha_{i-1}-\alpha_{i})\leq n$.
	\label{lem:rollbacks}
\end{lemma}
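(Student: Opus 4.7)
The plan is a short amortized/telescoping argument. Introduce $\alpha_0 := 0$ for convenience and define $\Delta_i := \alpha_i - \alpha_{i-1}$ for $i = 1, \ldots, n$. By Lemma~\ref{lem2}(i), $\Delta_i \leq 1$ for every $i$, so on an increment step $\Delta_i = +1$, and on a pullback step $\Delta_i = -(\alpha_{i-1}-\alpha_i) \leq 0$. The quantity we want to bound is exactly the total magnitude of these non-positive contributions.

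Summing telescopes to $\sum_{i=1}^{n} \Delta_i = \alpha_n - \alpha_0 = \alpha_n \geq 0$. Splitting the left-hand side into its increment and pullback contributions rewrites this as $(\#\,\text{increments}) - \sum\{\alpha_{i-1}-\alpha_i : \alpha_i \leq \alpha_{i-1}\} = \alpha_n$. Since there are at most $n$ steps in total (so at most $n$ increments) and $\alpha_n \geq 0$, the total pullback distance is at most $n$, which is the stated claim.

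I do not expect any real obstacle: the nontrivial ingredient, $\alpha_{i+1} \leq \alpha_i + 1$, is already supplied by Lemma~\ref{lem2}(i), and the remainder is a one-line rearrangement of a telescoping sum — exactly why the authors label the statement as \emph{immediate}. The only minor bookkeeping point is fixing an initial value (here $\alpha_0 := 0$ for the empty prefix); any reasonable choice affects the final bound by at most an additive constant independent of $n$.
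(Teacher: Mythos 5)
Your proof is correct and is essentially the paper's own argument: the paper's one-line proof ("As $\alpha_1=1$, there are no more than $n$ increments and therefore the sum over pullbacks is bounded by $n$") is precisely the telescoping rearrangement you spell out, relying on the same ingredient, Lemma~\ref{lem2}(i). The only cosmetic difference is that you anchor the telescope at $\alpha_0:=0$ while the paper uses $\alpha_1=1$.
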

\begin{proof}
	As $\alpha_1=1$, there are no more than $n$ increments and therefore the sum of $(\alpha_{i-1}-\alpha_{i})$ over all pullbacks is bounded by $n$ as well.                        
\end{proof}
We also show that the sum of $\alpha_i$ resulting from the pullbacks is $O(n\log n)$.  						
\begin{theorem}
	$\sum_{i\,|\,\alpha_{i}\leq \alpha_{i-1}}\alpha_i=O(n\log n)$. 
	\label{th2}
\end{theorem}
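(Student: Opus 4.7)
The plan is to combine two structural ingredients about $\alpha_i$ that together give the logarithmic-factor saving beyond Lemma~\ref{lem:rollbacks}. First, at every step $i$ the shortest unique suffix $s_i = w[i-\alpha_i+1..i]$ is a new substring of $w$: its occurrence at $[i-\alpha_i+1,i]$ is its leftmost occurrence in $w[1..i]$. Hence the SUS strings $\{s_i\}_i$ are pairwise distinct substrings of $w$, and in particular for every $\ell$ the number of steps with $\alpha_i=\ell$ is at most $c_w[\ell]$. Second, Lemma~\ref{lem2}\ref{claim2} forces $\alpha_i \leq L_{i-1}+1$ at every pullback, where $L_{i-1}$ is the right endpoint of the strictly-increasing interval of $c_{w[1..i-1]}$; moreover, the analysis of the update rule in Section~\ref{sec:updating} shows that $L_i$ is non-decreasing in $i$, with $L_i - L_{i-1} \in \{0,1\}$.

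Based on these ingredients, I would partition pullback steps into epochs defined by the value of $L_{i-1}$: let $I_\ell = \{i : L_{i-1}=\ell, \ \alpha_i \leq \alpha_{i-1}\}$. Within the epoch $I_\ell$, every pullback satisfies $\alpha_i \leq \ell+1$, so that epoch contributes at most $(\ell+1)\cdot |I_\ell|$ to the pullback sum. Because $L_i$ grows by at most one per step, the epochs $I_\ell$ are contiguous (possibly empty) intervals partitioning $\{1,\dots,n\}$, and an epoch at level~$\ell$ can exist only after the preceding $\ell$-many $L$-increments have occurred. Ingredient~(i) then bounds $|I_\ell|$ via the substring counts: each pullback in $I_\ell$ produces a distinct new SUS of length $\leq \ell+1$, and the shape of $c_w$ from Theorem~\ref{th1} controls how many such distinct SUSes are available at each level. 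Telescoping these per-epoch bounds across the $O(\log n)$ dyadic ranges of $\ell$ (or summing the harmonic contribution $\sum_\ell (\ell+1)/\ell$ of the $L$-incremental growth) should produce the claimed $O(n\log n)$.

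The main obstacle is making the per-epoch bound tight enough. The naive distinctness bound $|I_\ell|\leq c_w[\ell+1]\leq n-\ell$ is far too weak (it would yield $\Theta(n^2)$ in the worst case), so the crucial point is to exploit the coupling between Lemma~\ref{lem2} and Theorem~\ref{th1}: the slow growth of $L$ synchronises the arrival of ``high'' pullbacks, so a run of consecutive high pullbacks in one epoch rapidly exhausts the substring inventory of the appropriate length. Making this synchronisation quantitative --- turning ``slow $L$-growth plus bounded new-substring supply'' into a harmonic saving --- is the delicate technical step and the place where the $\log n$ factor ultimately emerges.
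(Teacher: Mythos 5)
Your proposal takes a genuinely different route from the paper. The paper's proof is short and leans on a known black-box result: it observes that every pullback at step $i$ certifies a \emph{new} occurrence (or new right context) of a bispecial substring of length roughly $\alpha_{i-1}$, that this occurrence is permanent, and that each such event corresponds to a distinct irreducible LCP value; it then invokes the theorem of K\"arkk\"ainen, Manzini and Puglisi that the irreducible LCP values of a length-$n$ string sum to at most $2n\log n$. Your sketch, by contrast, tries to derive the $\log n$ factor directly from the shape of $c_w$ (Theorem~\ref{th1}) and the monotone growth of $L$, without invoking the irreducible-LCP bound.

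However, there is a genuine gap, and you have identified it yourself: the argument never actually produces a usable bound on $|I_\ell|$. The two ingredients you cite are individually correct --- the SUS strings $s_i$ are pairwise distinct (since $s_j$ has no occurrence in $w[1..j-1]$ for $j>i$, while $s_i$ occurs in $w[1..i]\subseteq w[1..j-1]$), and $\alpha_i\le L_{i-1}+1$ at a pullback with $L_i$ non-decreasing --- but they do not combine to yield $O(n\log n)$. Distinctness only gives $|I_\ell|\le\sum_{k\le\ell+1}c_w[k]$, which is far too large, and the monotonicity of $L$ gives no upper bound on how long the process can linger at a fixed $L$-level while producing many pullbacks. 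Note that $\sum_\ell|I_\ell|\le n$ (all steps) together with the per-pullback bound $\alpha_i\le\ell+1$ gives at best $O(n\cdot L_{\max})$, which can be $\Theta(n^2)$. The phrase ``the slow growth of $L$ synchronises the arrival of high pullbacks'' is the right intuition, but turning it into a quantitative statement is precisely the content that is missing; you would essentially need to reprove something with the strength of the irreducible-LCP bound. I would recommend the paper's route: show the pullback yields a new irreducible LCP of length $\alpha_{i-1}-1$ (you should also verify the bispecial witness is $w[i-\alpha_{i-1}+1..i-1]$, of length $\alpha_{i-1}-1$ --- it is left-special because its earlier occurrence has a different left context, and at a pullback it is also right-special because its earlier occurrences are followed by a letter other than $w[i]$), note that the witness persists as $w$ grows, and conclude by the $2n\log n$ bound on the sum of irreducible LCPs.
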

The proof of Theorem~\ref{th2} is given in the Appendix. 

\section{Maintaining $\delta$ online}
We now turn to our main problem: maintaining normalized substring  complexity $\delta$ online. We follow the \texttt{update} algorithm which decomposes this problem into two: computing $\alpha_i$ and maintaining $\max\{c[k]/k\}$ on the array $c$ under updates specified by \textsc{update}. In this section, we describe efficient online solutions to each of the two problems.

\subsection{Computing $\alpha_i$}
\label{sec:alphai}
Computing $\alpha_i$ is a known problem in string algorithms: values $(\alpha_i-1)$ are lengths of \textit{longest repeated suffixes}, i.e. longest suffixes of $w[1..i]$ occurring earlier\footnote{Longest repeated suffixes are sometimes called \textit{longest previous factors} \cite{10.1007/978-3-540-87744-8_58}, however the latter term can also refer to longest earlier-occurring factors \textit{starting} at a given position, rather than ending, as e.g. in \cite{CROCHEMORE200875}.}. 
Perhaps the easiest way to compute $\alpha_i$'s online is to apply Ukkonen's online suffix tree construction algorithm \cite{DBLP:journals/algorithmica/Ukkonen95}. After reading $w[i]$, the algorithm computes $\alpha_{i}$ as a by-product. It maintains the node, called \textit{active point} in \cite{DBLP:journals/algorithmica/Ukkonen95}, corresponding to the longest suffix of $w[1..i]$ appearing earlier. The string length of this node is easily retrieved from the algorithm. $\alpha_{i}$ is one more than this length. Since Ukkonen's algorithm works in time $O(n\log \sigma)$, this gives a way to compute $\alpha_{i}$ in $O(\log \sigma)$ \textit{amortized} time per character, using $O(n)$ words of space. 

The above solution provides only an amortized time bound and can spend up to $O(n)$ time on an individual character. In~\cite{10.1007/978-3-540-87744-8_58} the authors present an algorithm that spends $O(\log^3 n)$ worst-case time on a character, and, moreover, uses a \textit{succinct} space of $n\log \sigma + o(n\log \sigma)$ bits where $\sigma$ is the alphabet size. In~\cite{10.1007/978-3-030-51466-2_31}  the time is improved to \textit{amortized} $O(\log^2 n)$, with even stronger condition of \textit{compressed} space. 

Here we briefly describe how the approach of \cite{10.1007/978-3-540-87744-8_58} can be modified to obtain $O(\log n)$ worst-case time per character, under a less stringent space requirement of $O(n)$ words. The algorithm of \cite{10.1007/978-3-540-87744-8_58} relies on the online construction of \textit{compressed prefix array} (CPA) for the streamed string. CPA is defined as the more common \textit{compressed suffix array} built on the inverted string, as it is constructed on a string provided right-to-left, as in Weiner's suffix tree algorithm. The CPA is represented by two data structures that we have to maintain. The first is the Burrows-Wheeler transform (BWT) of the current input string, on which we have to support \textsc{rank} and \textsc{select} queries under updates of inserting a character at any position of the BWT string. The second data structure is the \textit{longest common suffix} (LCS) array which is a dynamic array of numbers on which we have to support \textit{range minimum queries} (\textsc{rmq}) under updates of inserting a new entry and modifying an entry. An update to the whole CPA after appending a character to the text is implemented by a constant number of updates and queries to BWT string and LCS array. 

\textsc{rank} and \textsc{select} queries on a dynamic string is an extensively studied problem (see \cite{lee2009dynamic,doi:10.1137/130908245,MunroN15} and references therein). 
It can be solved in optimal $O(\log n/\log \log n)$ worst-case time on both updates and queries for strings on a general alphabet \cite{MunroN15}. 

The most time-consuming step of the algorithm of \cite{10.1007/978-3-540-87744-8_58} comes from \textsc{rmq} queries on a dynamic LCS array. Solutions proposed in the literature focus either on a compact representation of the LCS array itself \cite{10.1007/978-3-030-51466-2_31} or on saving the working space \cite{HELIOU2016108}, or provide amortized bounds \cite{brodal2011path}. However, if we allow $O(n)$ words of space, we can simply keep the array in a balanced range tree supporting both \textsc{rmq} queries and array updates in $O(\log n)$ time. 

We summarize this discussion in the following result. 
\begin{theorem}
	$\alpha_i$'s can be computed online in worst-case $O(\log n)$ time per character. 
\end{theorem}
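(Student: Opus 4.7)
The plan is to instantiate the compressed prefix array (CPA) framework of~\cite{10.1007/978-3-540-87744-8_58} with dynamic substructures that each support all required operations in worst-case $O(\log n)$ time, relaxing the succinct-space constraint of that paper in favor of $O(n)$ words. The CPA-based algorithm already extracts $\alpha_i$ as a by-product of processing each streamed character (via the length of the longest repeated suffix), so what remains is to certify that its two dynamic substructures can be realized within the target time bound and that the number of operations performed on them per appended character is $O(1)$.

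First I would maintain the BWT of the current input string as a dynamic sequence supporting \textsc{rank}, \textsc{select}, and character insertion at arbitrary positions. Plugging in the construction of Munro and Nekrich~\cite{MunroN15} gives $O(\log n/\log\log n)$ worst-case time per operation on a general alphabet, which is strictly within our budget. As already noted in the text leading up to the theorem, the CPA update after appending a character to the stream requires only a constant number of \textsc{rank}/\textsc{select} queries and a single insertion into this structure.

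Second, in place of the succinct LCS representations used in~\cite{10.1007/978-3-540-87744-8_58,10.1007/978-3-030-51466-2_31,HELIOU2016108,brodal2011path}, I would store the LCS array in a balanced binary search tree (e.g., a red-black or weight-balanced tree) whose leaves hold the LCS entries in positional order, augmented at each internal node with the minimum value in its subtree. This supports a point insertion, a point modification, and a range-minimum query in worst-case $O(\log n)$ time, without any amortization. Each CPA update again triggers only a constant number of such operations, so the total cost per character is $O(\log n)$ worst-case, and $\alpha_i$ is read off at the end of the update.

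The main obstacle is essentially a bookkeeping one: verifying that the substitution of our dynamic BWT and LCS structures into the algorithm of~\cite{10.1007/978-3-540-87744-8_58} preserves correctness and invokes each substructure only $O(1)$ times per character, with no hidden amortization sneaking in via the succinct-space machinery we discarded. Since we are merely swapping implementations of data structures that expose the same abstract interface (\textsc{rank}/\textsc{select}/insert for the BWT; \textsc{rmq}/update/insert for the LCS array), this reduces to a direct inspection of the update procedure of~\cite{10.1007/978-3-540-87744-8_58} rather than a new combinatorial argument.
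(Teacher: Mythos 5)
Your proposal matches the paper's argument essentially line for line: both rely on the CPA framework of Okanohara--Sadakane, replace the succinct BWT with the dynamic rank/select structure of Munro--Nekrich achieving $O(\log n/\log\log n)$ worst-case time, and replace the succinct LCS representation with a balanced search tree augmented with subtree minima supporting RMQ/insert/update in $O(\log n)$ worst-case time, observing that each appended character triggers only $O(1)$ operations on each structure. This is the same proof.
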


\subsection{Convex hull reduction}
\label{sec:ch-reduction}

We now turn to the second problem: maintaining $\max\{c[k]/k\}$ under updates to $c$ array specified by \textsc{update}. We reduce the problem to a geometric problem of dynamically maintaining an (upper) convex hull on a certain set of points on the plane. 

We represent the sub-array $c[1..R]$  of the substring count array by a (dynamic) set of planar points $\Points=\{(k,c[k]), k=1..R\}$. 
Recall (see procedure \textsc{update}) that appending a letter to $w$ is equivalent  to modifying the set of points by one of the following two operations:
\begin{enumerate}[label=(\roman*)]
	\item given $\alpha\in [1..R]$, shift $(k,c[k])$ to $(k,c[k]+1)$ for each $k\in [\alpha..R]$,
	\item increment $R\leftarrow R+1$ and add a new point $(R,c[R])$ with $c[R]=c[R-1]$. 
\end{enumerate}
Operation (i) shifts all points $(k,c[k])$, $k\in [\alpha..R]$, upward by $1$, where value $\alpha$ is computed at the previous step, 
while operation (ii) adds a new point to the right of the current points.

The following observation guides our approach. 
\begin{lemma}
	Consider the point set $\Points=\{(k,c[k]), k=1..R\}$ and consider the upper convex hull of $\Points$. Let $(k',c[k'])$ be the point of tangency of the tangent line from the origin to the upper convex hull of $\Points$ (or one of those points, in case the tangent line coincides with an edge of the convex hull). Then $\delta(w)=c[k']/k'$. 
	\label{lem3}
\end{lemma}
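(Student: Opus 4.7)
The plan is to interpret $c[k]/k$ geometrically as the slope of the line from the origin through the point $(k,c[k])\in\Points$. Then $\delta(w)=\max_k c[k]/k$ becomes the maximum, over all points of $\Points$, of the slope of the line joining that point to the origin. The task is to show that this maximum is attained precisely at the tangent point $(k',c[k'])$.

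First, I would show that the maximum can only be attained at a point lying on the upper convex hull of $\Points$. Suppose $(k,c[k])$ is strictly below the upper hull. Then there exist hull vertices $(k_1,y_1)$ and $(k_2,y_2)$ with $k_1<k<k_2$ such that $(k,c[k])$ lies strictly below the segment joining them, so $k=\lambda k_1+(1-\lambda)k_2$ for some $\lambda\in(0,1)$ and $c[k]<\lambda y_1+(1-\lambda)y_2$. Dividing, $c[k]/k$ is strictly less than the mediant $(\lambda y_1+(1-\lambda)y_2)/(\lambda k_1+(1-\lambda)k_2)$, which in turn is at most $\max(y_1/k_1,y_2/k_2)$. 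Hence the slope from the origin to $(k,c[k])$ is strictly dominated by the slope to one of the two flanking hull vertices, so the maximum is attained on the upper hull.

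Second, I would restrict attention to hull vertices. Let $s^{*}$ denote the slope of the tangent line from the origin to the upper hull, meeting it at $(k',c[k'])$. By definition of tangent, the entire hull lies on or below this line, so every vertex $(x,y)$ of the hull satisfies $y\le s^{*}x$, i.e.\ $y/x\le s^{*}=c[k']/k'$, with equality only for points lying on the tangent line itself. Combining the two steps gives $\max_{k\in[1..R]} c[k]/k = c[k']/k'$. Finally, by Theorem~\ref{th1}, $\widetilde{k}\in[1..L]\subseteq[1..R]$, so restricting the maximum to $[1..R]$ does not change its value, and $\delta(w)=c[k']/k'$.

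The main subtlety is the uniqueness issue flagged in the statement: if the tangent line coincides with a hull edge, then all points along that edge share the same (maximum) ratio, and one must allow $k'$ to be any of the corresponding extremal hull vertices; the inequalities above degenerate to equalities for exactly those points, so the conclusion still holds. Beyond that, the argument is a routine exercise in planar geometry and should not present further obstacles.
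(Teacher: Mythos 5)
Your proposal is correct and follows essentially the same route as the paper: interpret $c[k]/k$ as the slope from the origin, observe the maximizer lies on the upper hull, and identify it with the tangency point. The paper dispatches the second and third steps with a single ``clearly,'' whereas you spell them out via the mediant inequality and the tangent-line support inequality, and you also explicitly justify the restriction from $[1..n]$ to $[1..R]$ via Theorem~\ref{th1}---all of which is sound and fills in detail the paper leaves implicit.
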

\begin{proof}
	The ratio $c[k]/k$ is the slope of the line between the origin and $(k,c[k])$. Therefore, $\delta(w)$ corresponds to the point $(k',c[k'])\in\Points$ with the maximal slope. Clearly, $(k',c[k'])$ belongs to the upper convex hull and, on the other hand, to the tangent line from the origin to this convex hull. 
\end{proof}
\begin{figure}
	\centering
	\includegraphics[width=0.4\linewidth]{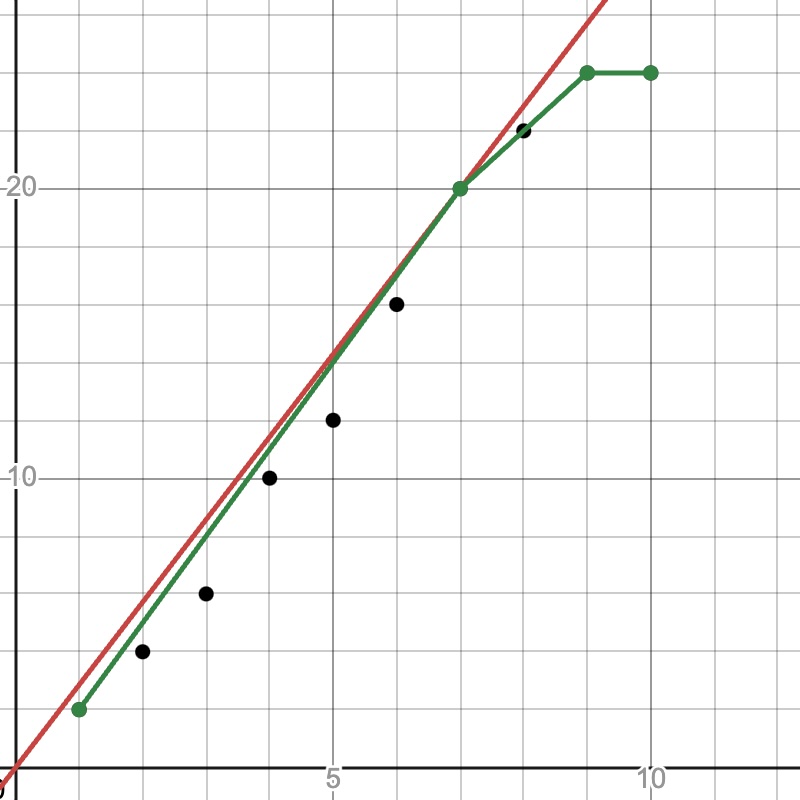}
	\caption{Point set $\Points$ for $w=011010011001011010010110011010011$, its upper convex hull (green) and tangent line (red). Here $R=10$. Tangency point $(7,20)$ corresponds to $\delta(w)=20/7$.
	}
	\label{fig:1}
\end{figure}
Figure~\ref{fig:1} illustrates Lemma~\ref{lem3}. 

The above observation will be used by our main algorithm presented in the next section. The algorithm will compute the tangent line to an interval of points and will use the result to maintain a point of maximum slope.

\subsection{Maintaining a point of maximum slope}
\label{sec:ouralg}

As follows from the above, the problem can be viewed as the one of maintaining the convex hull\footnote{Throughout this section, ``convex hull'' means ``upper convex hull''.} under operations (i) and (ii) from the previous section. 
Dynamic convex hull problem has been extensively studied in computational geometry, starting from the 70s \cite{DBLP:journals/cacm/Preparata79,OVERMARS1981166,10.5555/645413.757575,BrewerBW24} However, known data structures support updates of individual points (insertions/deletions) and don't support bulk updates required by our operation (i) and therefore cannot be applied to our problem in a blackbox fashion. 
In \cite{BrewerBW24}, it is shown that dynamic convex hull problem can be solved more efficiently when updates occur only at the extremities of the current set of points (along a fixed direction). 

In this section, we will demonstrate how the data structure of \cite{BrewerBW24} can be used to maintain the maximum slope under our updates in $O(n \log  n)$ time, that is in amortized $O(\log n)$ time per character. We summarize the results of \cite{BrewerBW24} that we need for our purposes. 
\begin{theorem}[\cite{BrewerBW24}]
	There exists a data structure supporting the following operations on a set of points $\mathcal{P} = \{(x,y)\}$. 
	\begin{itemize}
		\item inserting to $\mathcal{P}$ a point $(x',y')$ such that either $x'< \min_{(x,y)\in\mathcal{P}} x$ or $x'> \max_{(x,y)\in\mathcal{P}} x$, in worst-case time $O(1)$,
		\item deleting a point $(x',y')\in \mathcal{P}$ with either $x'=\min_{(x,y)\in\mathcal{P}} x$ or $x'= \max_{(x,y)\in\mathcal{P}} x$, in worst-case time $O(1)$,
		\item computing the tangent of the convex hull of $\mathcal{P}$ through a given point outside the convex hull and reporting the point of tangency, in time $O(\log n)$ where $n$ is the current number of points in $\mathcal{P}$. 
	\end{itemize}
	\label{thm:brewer}
\end{theorem}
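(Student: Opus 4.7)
The plan is to maintain the upper convex hull of $\Points$ as an explicit double-ended list $H$ of points ordered by $x$-coordinate, with the invariant that consecutive triples in $H$ make only right turns, so that $H$ is exactly the sequence of vertices of the upper hull. The crucial structural simplification compared to the fully dynamic setting is that updates touch only the leftmost or rightmost element of $\Points$ in $x$, so the extremal points of $\Points$ are always endpoints of $H$. This is what lets us aim for $O(1)$ per update rather than the $\Omega(\log n)$ that fully dynamic hulls require.

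For an insertion, say at the right, let $(x',y')$ with $x' > \max\{x : (x,y) \in \Points\}$ be the new point. I would repeatedly inspect the last two vertices of $H$; while they together with $(x',y')$ do not make a right turn, pop the last vertex and continue, then push $(x',y')$. This is the standard incremental Graham-style update and yields \emph{amortized} $O(1)$ per insertion because every point is pushed and popped at most once. Insertions at the left are symmetric.

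The genuinely hard step is supporting deletions at an extremity in \emph{worst-case} $O(1)$, since removing the extremal vertex can expose a whole sequence of previously dominated points that must re-enter the hull. To handle this, I would pair $H$ with two auxiliary stacks $S_R$ and $S_L$: whenever an insertion at the right end evicts a vertex, the evicted vertex is pushed onto $S_R$ together with a pointer to its predecessor on $H$, and symmetrically for $S_L$. A right-end deletion then in principle needs to undo the last insertion by popping $(x',y')$ from $H$ and reinstating a suffix of $S_R$. A naive reinstatement is linear in the number of restored points, so to attain the worst-case $O(1)$ bound I would use the classical trick that realizes a worst-case $O(1)$ deque from two amortized $O(1)$ stacks: at every update, perform a constant amount of ``look-ahead'' restoration work, spreading any bulk reinstatement uniformly over subsequent operations via a bounded number of credits per update. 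The main obstacle will be to specify this scheme so that the invariants are preserved, namely that at every moment enough restoration work has been completed that the tangent query answered on the current, possibly partially restored, $H$ coincides with the tangent on the true upper hull of the current $\Points$.

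For the tangent query from an external point $q$, I would exploit that the slopes of the edges of the upper hull, read from left to right, are strictly decreasing, so the slopes from $q$ to successive hull vertices form a unimodal sequence. The point of tangency can therefore be located by binary search on $H$ in $O(\log n)$ time, provided $H$ is stored so that the $i$-th element is accessible in $O(\log n)$ or better, for instance in a balanced BST or a finger tree keyed by hull-position. Combining this with the amortized-to-worst-case conversion above yields the three time bounds claimed in the statement.
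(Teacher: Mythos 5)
The paper does not prove this statement at all; it cites it as a black-box result from \cite{BrewerBW24} and uses it as an ingredient, so there is no in-paper proof for your attempt to match. Judged on its own terms, your sketch has genuine gaps. Graham-scan insertion is inherently \emph{amortized} $O(1)$, not worst-case: a single right-end insertion may evict an arbitrarily long suffix of the hull, and the standard potential argument pays for this by charging each eviction to the evicted point's own earlier insertion. Once right-end deletions are also permitted, an adversary can alternate insert and delete at the same end, forcing the same long chain to be evicted and reinstated over and over, and the amortization fails outright. The classical worst-case deque-from-two-stacks construction that you invoke works because the two stacks jointly store a fixed underlying sequence and the rebalancing is purely structural bookkeeping; here the set of vertices that must be restored after a deletion is determined by a geometric convexity condition against the surviving points, so it is not clear what a ``constant amount of look-ahead restoration'' should even mean, nor how to answer a tangent query correctly on a hull that is only partially reconstituted. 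You name this as ``the main obstacle,'' but it is exactly the content of the theorem, and your sketch does not resolve it.

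There is also a representational contradiction you leave unaddressed. Your $O(\log n)$ tangent query requires binary search over the hull vertices, which you propose to support with a balanced BST or finger tree; any such pointer-balanced structure costs $\Omega(\log n)$ per vertex insertion or deletion, conflicting with the claimed $O(1)$ worst-case endpoint update. Conversely, a plain deque gives $O(1)$ endpoint updates but no efficient indexed access for the search. Reconciling $O(1)$ worst-case deque updates with logarithmic tangent search on the maintained hull is precisely the nontrivial contribution of \cite{BrewerBW24}; your proposal identifies the right ingredients but does not show how they can coexist.
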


The algorithm implied by \textsc{update} (Sect.~\ref{sec:updating}) is a sequence of $n$ steps, so that at each step $i$ we (i) either increment the values in some contiguous right part $c[\alpha_i..R_i]$\footnote{In this section, we use notation $R_i$ to refer to the value of $R$ at step $i$} of the current array $c$, or (ii) extend array $c$ on the right by duplicating the last entry. Along this process, we have to maintain $\max_{1\leq k\leq R_i}\{c[k]/k\}$. 
We call the \emph{active interval} the fragment $c[\alpha_i..R_i]$ that is incremented at the current step $i$. Note that for operation (ii), the active interval is $c[R_i..R_i]$, as this case occurs when $\alpha_i=R_{i-1}+1$ (see Section~\ref{sec:updating}). 

For an interval $\mathcal{I}\subseteq [1..R_i]$, we say that $\widetilde{k}_\mathcal{I}=\argmax_{k\in \mathcal{I}}\{c[k]/k\}$ is the \textit{leader on the interval} $\mathcal{I}$, or simply the \textit{leader} if $\mathcal{I}=[1..R_i]$. 

By Lemma~\ref{lem2}(i), $\alpha_{i}\leq \alpha_{i-1}+1$, therefore at each step, the left border $\alpha_i$ of the active interval is either incremented by $1$, or decremented or stays unchanged. 
We will decompose the algorithm into \emph{rounds} that are maximal sequences of steps with $\alpha_i$ strictly increasing, i.e. steps $i$ with $\alpha_{i}=\alpha_{i-1}+1$. Each round is followed by a pullback, i.e. a step $i$ 
such that $\alpha_{i}\leq \alpha_{i-1}$. Note that by Lemma~\ref{lem:rollbacks}, the sum of $(\alpha_{i-1} - \alpha_{i})$ over all pullback steps is bounded by $n$. 

We dynamically maintain the active interval $c[\alpha_i .. R_i]$ and the interval $c[1..\alpha_i-1]$ before the active one, in two different ways. 
For the interval $c[1..\alpha_i-1]$, we will explicitly store values $c[k]$  and maintain the leader $\argmax_{1\le k\le \alpha_i-1}\frac{c[k]}{k}$. To this aim, we use two arrays $\mathit{val}$ and $\mathit{leader}$ where
$\mathit{val}[k]=c[k]$ and $\mathit{leader}[k]=\argmax_{1\le j \le k}\frac{val[j]}{j}$. Arrays $\mathit{val}$ and $\mathit{leader}$ are dynamic and may grow in time, we implement them using the standard doubling technique.
Since within each round, we have $\alpha_{i}= \alpha_{i-1}+1$, updating $\mathit{val}$ and $\mathit{leader}$ takes $O(1)$ operations at each step. For the pullback step, no updates of $\mathit{val}$ and $\mathit{leader}$ are needed. 

For the active interval $c[\alpha_i .. R_i]$, we maintain the leader using the reduction to the convex hull problem explained earlier. We will maintain the convex hull of the active interval using the data structure of Theorem~\ref{thm:brewer}. However, each point $(k,c[k])$ of the active interval will be represented by point $(k,c[k]-\Delta)$, for a dynamic parameter $\Delta$ called the \emph{active shift}. The meaning of $\Delta$ is the vertical shift of the actual point $(k,c[k])$ relative to the $y$-coordinate of the point currently stored in the data structure. Our algorithm will ensure that the value of  $\Delta$ is the same for all points in the active interval of the array. 

We now explain how a step of our algorithm is implemented.
We ensure that the following invariants are satisfied after each step: 
\begin{invariant}
	\label{inv1}
	For any $k$ in the active interval $[\alpha_i..R_i]$, the convex hull data structure contains exactly one point 
	$(k,c[k]-\Delta)$ with the correct value of $\Delta$. 
\end{invariant}
\begin{invariant}
	\label{inv2}
	For any $k$ in the interval $[1..\alpha_i-1]$, all values $\mathit{val}[k]$ and $\mathit{leader}[k]$ are correct.    
\end{invariant}
At initialization, after the first character of the text is read, we have $c[1]=1$, $\alpha_1=R_1=1$, and we insert point $(1,c[1])$ into the convex hull structure, set $\Delta=0$, and initialize empty arrays $\mathit{val}$ and $\mathit{leader}$. 

At each step $i\ge 2$, we proceed as follows. First observe that operation (ii) is easily implemented by retrieving the rightmost point $(R_{i-1},c[R_{i-1}]-\Delta)$ from the convex hull data structure\footnote{The convex hull data structure described in \cite[Theorem 5]{BrewerBW24} supports the retrieval of extreme points in $O(1)$ time. Alternatively, we can maintain a separate list of all points of the active interval sorted by $x$-coordinates.} and inserting a new rightmost point $(R_{i},c[R_{i-1}]-\Delta)$ for $R_{i}= R_{i-1}+1$. Consider now operation (i). If  $\alpha_{i}=\alpha_{i-1}+1$, we retrieve the current leftmost point  $(\alpha_{i-1},c[\alpha_{i-1}]-\Delta)$ from the convex hull data structure and determine the value $c[\alpha_{i-1}]$ by adding $\Delta$, then update $\mathit{val}$ and $\mathit{leader}$. Specifically, we set $\mathit{val}[\alpha_{i-1}]\leftarrow c[\alpha_{i-1}]$ and set $\mathit{leader}[\alpha_{i-1}]\leftarrow \alpha_{i-1}$ if $\mathit{val}[\alpha_{i-1}]/\alpha_{i-1}\geq \mathit{val}[\mathit{leader}[\alpha_{i-1}-1]]/\mathit{leader}[\alpha_{i-1}-1]$ and $\mathit{leader}[\alpha_{i-1}]\leftarrow \mathit{leader}[\alpha_{i-1}-1]$ otherwise. Then we remove the leftmost point $(\alpha_{i-1},c[\alpha_{i-1}]-\Delta)$ from the convex hull data structure and increment $\Delta$ by $1$. Otherwise, if $\alpha_{i}\le \alpha_{i-1}$, we execute the pullback step. In this case, we consecutively insert points $(k,\mathit{val}[k]-\Delta)$ for all $k$ from $\alpha_{i-1}-1$ down to $\alpha_{i}$ into the convex hull data structure. We observe that every inserted point is to the left of all points already stored in the data structure. Then we increment $\Delta$ by $1$.  

\begin{lemma}
	\label{lemma:invar}
	The above algorithm correctly maintains Invariant~\ref{inv1}. In other words, for each $k\in [\alpha_i .. R_i]$ and any point $(k,\nu)$ stored in the convex hull data structure, $c[k]=\nu + \Delta$, for the current value of $\Delta$. 
\end{lemma}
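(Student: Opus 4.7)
I would prove the lemma by induction on the step number $i$. The base case is handled by initialization: after step $1$ we have $\Delta=0$, active interval $\{1\}$, and the structure contains the single point $(1,1)$, so the claim holds trivially because $c[1]=1$. For the inductive step, I would split into the three branches that \textsc{update} can take at step $i$ and apply Lemma~\ref{lem1} in each to relate $c_i[k]$ to $c_{i-1}[k]$: namely $c_i[k]=c_{i-1}[k]+1$ for $k\in[\alpha_i..R_{i-1}]$, $c_i[k]=c_{i-1}[k]$ for smaller $k$, and $c_i[R_i]=c_{i-1}[R_i-1]$ in the extension case.

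In the operation-(ii) branch, Lemma~\ref{lem2}(i) forces $\alpha_{i-1}=R_{i-1}$, so the previous active interval is the singleton $\{R_{i-1}\}$ and the new one is $\{R_i\}$. The algorithm reads the $y$-coordinate $\nu=c_{i-1}[R_{i-1}]-\Delta$ of the rightmost stored point, inserts $(R_i,\nu)$, and leaves $\Delta$ unchanged, which matches $c_i[R_i]=c_{i-1}[R_{i-1}]=\nu+\Delta$. In the round-continuation branch ($\alpha_i=\alpha_{i-1}+1$), the algorithm removes the leftmost point (so $\alpha_{i-1}$ drops out of the active interval, as required) and increments $\Delta$ by $1$; for every remaining $k\in[\alpha_i..R_i]$ the $+1$ on $\Delta$ exactly cancels the $+1$ on $c[k]$ dictated by Lemma~\ref{lem1}, so the stored $\nu$ still satisfies $\nu+\Delta=c_i[k]$.

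The pullback branch is the most delicate and I expect it to be the main obstacle. Here $\Delta$ is incremented first and then points are inserted for $k=\alpha_{i-1}-1$ down to $\alpha_i$. I would verify the invariant in two parts: for $k\in[\alpha_{i-1}..R_{i-1}]$ (already in the structure) the same cancellation as in the round-continuation branch applies, giving $\nu+\Delta_i=c_i[k]$; for each freshly inserted $k\in[\alpha_i..\alpha_{i-1}-1]$, the point is by construction $(k,c_i[k]-\Delta_i)$, and the algorithm can compute $c_i[k]$ by reading $c_{i-1}[k]=k\cdot\mathit{val}[k]$ (available thanks to Invariant~\ref{inv2} at step $i-1$) and adding $1$. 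The essential point is the order: incrementing $\Delta$ \emph{before} the insertions is what forces a single common value of $\Delta$ to be shared by all stored points at step $i$, and this is also the spot where an off-by-one in indexing (for instance, confusing $c_{i-1}$ with $c_i$ on the newly inserted range) could easily break the argument. Once this is written out carefully, the invariant is preserved in all three cases, completing the induction.
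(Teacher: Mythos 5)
Your proof is correct and follows the same induction-on-steps argument as the paper's, which condenses the forward case to ``the claim is straightforward'' and gives a single sentence for the pullback case; you simply spell out the case analysis, including the useful observation that operation~(ii) is a degenerate forward step whose active interval is forced by Lemma~\ref{lem2}(i) to be a singleton. One minor attribution slip: the identity $c_i[R_i]=c_{i-1}[R_i-1]$ does not follow from Lemma~\ref{lem1} alone but also uses Theorem~\ref{th1}(iii) (the descending tail of $c_{i-1}$ decreases by exactly one), which is what justifies the assignment $c[R]\leftarrow c[R-1]$ in \textsc{update}; this does not affect the validity of your argument.
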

\begin{proof}
	The Lemma follows by induction on steps. Invariant~\ref{inv1} is verified at initialization. Suppose it is satisfied after $(i-1)$ steps for some $i>1$.  If step $i$ is an increment step, the claim is straightforward. If step $i$ is a pullback step, we insert points $(k,c[k]-\Delta)$ for all $k$ such that $\alpha_i\le k < \alpha_{i-1}$, which guarantees that the active shift remains the same for all points in the data structure. 
\end{proof}

\begin{lemma}
	\label{lemma:invar2}
	The above algorithm correctly maintains  Invariant~\ref{inv2}. 
\end{lemma}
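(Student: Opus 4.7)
The plan is to prove Invariant~\ref{inv2} by induction on the step number $i$. The base case $i=1$ is trivial since $\alpha_1=1$ makes the interval $[1..\alpha_1-1]$ empty. For the inductive step, assuming the invariant holds after step $i-1$, I would exploit the fact that operation~(i) only modifies $c[k]$ for $k\ge \alpha_i$ and operation~(ii) only inserts a fresh point at the right end; hence for every $k<\alpha_i$, the value $c[k]$ is unchanged between the end of steps $i-1$ and $i$. This invariance of the underlying $c$ values for positions in the prefix is the backbone of the argument.

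In the pullback case ($\alpha_i\le\alpha_{i-1}$), the new prefix $[1..\alpha_i-1]$ is contained in the old one $[1..\alpha_{i-1}-1]$ and the algorithm leaves $\mathit{val}$ and $\mathit{maxval}$ untouched, so the induction hypothesis together with the invariance of $c[k]$ above directly delivers the claim. In the forward case ($\alpha_i=\alpha_{i-1}+1$), the prefix gains exactly one new position $k=\alpha_{i-1}$, while for $k<\alpha_{i-1}$ the stored values are unchanged and stay correct by IH. For the new position I would verify that the retrieved leftmost point $(\alpha_{i-1},c[\alpha_{i-1}]-\Delta)$ from the convex hull data structure yields the true current value of $c[\alpha_{i-1}]$: this is exactly Invariant~\ref{inv1}, already settled by Lemma~\ref{lemma:invar}. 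Since $\alpha_{i-1}<\alpha_i$, operation~(i) at step $i$ leaves $c[\alpha_{i-1}]$ unchanged, so the assignment $\mathit{val}[\alpha_{i-1}]\leftarrow c[\alpha_{i-1}]/\alpha_{i-1}$ is correct. Correctness of $\mathit{maxval}[\alpha_{i-1}]\leftarrow\max(\mathit{val}[\alpha_{i-1}],\mathit{maxval}[\alpha_{i-1}-1])$ then follows from the IH applied to $\mathit{maxval}[\alpha_{i-1}-1]$, which concerns a position in the old prefix.

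The main obstacle is this bridge between the shifted representation $(k,c[k]-\Delta)$ stored in the convex hull and the true value $c[\alpha_{i-1}]$ needed to set $\mathit{val}[\alpha_{i-1}]$; this is resolved by Invariant~\ref{inv1}, so no new technical work is required. A second small concern is operation~(ii), which by Lemma~\ref{lem2}(i) combined with the equality $\alpha_i=R_{i-1}+1$ forces $\alpha_{i-1}=R_{i-1}$ and hence $\alpha_i=\alpha_{i-1}+1$, making it a forward step with a simultaneous right-extension of the active interval; the promotion of position $\alpha_{i-1}$ into the prefix is then handled exactly as in the operation~(i) forward case, so no separate argument is needed.
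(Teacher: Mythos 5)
Your proof is correct and takes essentially the same approach as the paper, which simply states that the claim ``directly follows from the description of the algorithm.'' You spell out the details the paper leaves implicit: the induction, the observation that $c[k]$ is unchanged for $k<\alpha_i$, the use of Invariant~\ref{inv1} to recover the true value of $c[\alpha_{i-1}]$ from the $\Delta$-shifted representation, and the derivation $\alpha_{i-1}=R_{i-1}$ forcing operation~(ii) to be a forward step. That last point is a genuine clarification: the paper's description of operation~(ii) mentions only the rightmost-point insertion and does not explicitly say that the $\mathit{val}/\mathit{maxval}$ updates and leftmost-point removal of the forward case are also performed there (they must be, since the prefix interval grows by one position in that case too, just without the $\Delta$ increment), so your reading fills an omission in the paper's prose.
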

\begin{proof}
	Directly follows from the description of the algorithm. 
\end{proof}

At each step, we can compute the leader by taking the maximum of leader values on the two intervals.  
Given a set $P$ of planar points and $d>0$, 
we will say that the set $P'$ is a \textit{vertical $d$-shift} of $P$ iff for every point $p=(x,y)\in P$ there is exactly one point $p'=(x,y-d)\in P'$. 
To compute the leader on the active interval, we answer a tangent query to the convex hull, based on the following straightforward observation.
\begin{lemma}
	\label{lemma:shift1}
	Let $P$ be a set of planar points, $q=(\overline{x},\overline{y})$ a point outside the convex hull of $P$, and $P'$ the vertical $d$-shift of $P$ for some $d$. Let $H$ and $H'$ be the convex hulls of $P$ and $P'$ respectively. Then the tangent from $q$ to $H$ passes through point $p=(x,y)\in H$ iff the tangent from $q'=(\overline{x},\overline{y}-d)$ to $H'$ passes through point $p'=(x,y-d)\in H'$.
\end{lemma}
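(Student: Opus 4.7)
The plan is to observe that the transformation mapping $P \cup \{q\}$ to $P' \cup \{q'\}$ is simply the rigid translation $T \colon (x,y)\mapsto (x, y-d)$ of the plane (here I read the definition as intended by the conclusion of the statement, namely $p' = (x, y-d)$, which is the vertical shift consistent with the notation; the displayed formula $(x-d,y)$ in the paragraph preceding the lemma appears to be a typographical slip). Since translations are affine bijections that preserve collinearity, convexity, and the property of a line supporting a set from above, every geometric relation used in the definition of ``tangent from a point to a convex hull'' is preserved.

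The argument then proceeds in two short steps. First I would show that $H' = T(H)$, i.e.\ the convex hull of $P'$ equals the image of $H$ under $T$. This holds because $T$ commutes with taking convex combinations: if $p' = \sum \lambda_i p_i'$ with $p_i' = T(p_i)$ and $\sum \lambda_i = 1$, then $p' = T(\sum \lambda_i p_i)$, and vice versa. In particular, $(x,y)$ is a vertex of $H$ iff $(x, y-d)$ is a vertex of $H'$.

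Second, I would argue that the tangent line from $q$ to $H$ touching at $p$ corresponds, under $T$, to the tangent line from $q' = T(q)$ to $H' = T(H)$ touching at $p' = T(p)$. The characterization is: a line $\ell$ through $q$ is a tangent to $H$ through $p$ iff $\ell$ passes through $p$ and $H$ lies in one closed half-plane determined by $\ell$ with $p \in \ell \cap H$. Applying $T$, the image line $T(\ell)$ passes through $T(q) = q'$ and through $T(p) = p'$, and $T(H) = H'$ lies in the corresponding translated closed half-plane. Since $T$ is a bijection, this correspondence is ``iff,'' which yields the claim.

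There is no real obstacle here; the only subtlety is the aforementioned notational inconsistency in the definition of the $d$-shift. Once one reads the definition as a vertical translation (which is forced by the statement of the lemma and by the way it will be applied, since the active shift $\Delta$ shifts $y$-coordinates), the proof reduces to the standard observation that convex hulls and tangent lines are equivariant under translations of the plane.
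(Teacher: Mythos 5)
Your proof is correct. The paper itself offers no proof of this lemma, labeling it a ``straightforward observation,'' and the argument you give---that a vertical $d$-shift is a rigid translation of the plane, so convex hulls, supporting lines, and tangency points are all carried along equivariantly---is exactly the standard justification one would supply. Your observation that the definition of ``vertical $d$-shift'' in the paragraph preceding the lemma should read $(x,y-d)$ rather than $(x-d,y)$ is also correct; the stated form is a typographical slip, as the lemma's conclusion and the subsequent application to the active shift $\Delta$ (which modifies $y$-coordinates) both make clear.
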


Let $\Points_a=\{(k,c[k]), k=\alpha_i..R\}$ be the set of points of the current active interval. By Lemma~\ref{lemma:invar}, the set of points $\Points'_a$, stored in our convex hull structure,  is a vertical $\Delta$-shift of $\Points$. 
Let $p'_a$ be the tangent point of $(0,-\Delta)$ and the convex hull of $\Points'_a$.  Let $p_a$ be the tangency point  from $(0,0)$ to the convex  hull of $\Points_a$.  By Lemma~\ref{lemma:shift1} $p$ and $p_a$ have  the same $x$-coordinates and their $y$-coordinates differ by $\Delta$: if $p'_a=(\widetilde{k}_a,\nu)$, then $p_a=(\widetilde{k}_a,\nu+\Delta)$. 
Hence $\widetilde{k}_a$ is the leader in the active interval and its value is $\nu+\Delta$. By Theorem~\ref{thm:brewer}, we can find $p'_a$ in $O(\log n)$ time using our convex hull data structure. The resulting leader after step $i$ is either $\widetilde{k}_a$ or $\mathit{leader}[\alpha_i-1]$, depending on which of the fractions $(\nu+\Delta)/\widetilde{k}_a$ or $\mathit{val}[\mathit{leader}[\alpha_i-1]]/\mathit{leader}[\alpha_i-1]$ is larger.

The complexity of the algorithm follows from Theorem~\ref{thm:brewer}. Step (ii) involves a retrieval and an insertion and therefore is done in $O(1)$ time. An incremental step (i) involves a retrieval and a deletion and again, takes $O(1)$ time. A pullback involves an insertion of each of the $\alpha_i-\alpha_{i+1}$ points. By Lemma~\ref{lem:rollbacks}, these take $O(n)$ time altogether. We also have to answer $O(n)$ tangent queries which takes $O(n \log n)$ time. All updates of arrays $\mathit{val}$ and $\mathit{maxval}$ take $O(n)$ time. Thus, the algorithm runs in time $O(n\log n)$. 

\begin{theorem}
	\label{theor:runtime}
	An online algorithm for computing the values of $\delta$ can be implemented in $O(n \log  n)$ time.
\end{theorem}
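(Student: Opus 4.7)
The plan is to bound separately the four contributions to running time over the $n$ steps: structural updates to the convex hull data structure, tangent queries to extract the leader on the active interval, updates to $\mathit{val}$ and $\mathit{maxval}$, and the trivial bookkeeping of $\Delta$. Correctness is already secured by Lemmas~\ref{lemma:invar} and~\ref{lemma:invar2}, so only the cost accounting remains.

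For the structural updates, I would first observe that operation (ii) and the non-pullback case of operation (i) each touch only the rightmost or leftmost point of the hull, so by Theorem~\ref{thm:brewer} they are $O(1)$ per step and $O(n)$ in total over the $n$ characters. The subtle contribution is the pullback: at step $i$ it inserts $\alpha_{i-1}-\alpha_i+1$ new leftmost points. I would amortize these with Lemma~\ref{lem:rollbacks}, which bounds the total pullback distance by $n$; hence the cumulative number of pullback insertions is $O(n)$, each at $O(1)$ by Theorem~\ref{thm:brewer}. The arrays $\mathit{val}$ and $\mathit{maxval}$ are extended by at most one entry per step, for a total of $O(n)$.

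For the leader computation at each step, the plan is to take the maximum of $\mathit{maxval}[\alpha_i-1]$, retrievable in $O(1)$ by Invariant~\ref{inv2}, and the leader on the active interval obtained via a tangent query. By Lemma~\ref{lemma:invar} the stored point set is a vertical $\Delta$-shift of $\Points_a$, so by Lemma~\ref{lemma:shift1} the tangent from $(0,-\Delta)$ to the stored hull identifies the $x$-coordinate of the correct tangency point, with its $y$-coordinate recovered by adding $\Delta$. Each such query costs $O(\log n)$ by Theorem~\ref{thm:brewer}, contributing the dominating $O(n \log n)$ term.

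The main obstacle is conceptual rather than computational: I must make sure that the cost accounting is coherent with the order in which $\Delta$ is incremented and points are inserted, so that the shift invariant continues to hold throughout a pullback. I would handle this by noting that a pullback first increments $\Delta$ by $1$ and only then inserts each missing point $(k, c[k]-\Delta)$ with the new $\Delta$, exactly as prescribed in the algorithm and verified in Lemma~\ref{lemma:invar}. Summing the four contributions gives $O(n) + O(n) + O(n \log n) + O(n) = O(n \log n)$, as claimed.
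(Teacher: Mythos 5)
Your cost accounting matches the paper's argument essentially step for step: $O(1)$ per extremity insertion/deletion via Theorem~\ref{thm:brewer}, $O(n)$ total pullback insertions via Lemma~\ref{lem:rollbacks}, $O(n)$ for maintaining $\mathit{val}$ and $\mathit{maxval}$, and $O(n\log n)$ dominated by the tangent queries. (A minor off-by-one: the algorithm inserts points for $k$ from $\alpha_{i-1}-1$ down to $\alpha_i$, i.e.\ $\alpha_{i-1}-\alpha_i$ points rather than $\alpha_{i-1}-\alpha_i+1$, but this does not affect the bound.)
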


\subsection{Worst-case solution}
\label{sec:worst-case}

Due to pullback steps, Theorem~\ref{theor:runtime} provides only an amortized $O(\log n)$ bound on processing an individual character. In this section, we are interested in worst-case bounds, and we show how a bound of $O(\log^3 n)$ per character can be obtained. 

Here we rely on the technique of \cite{OVERMARS1981166} for maintaining a convex hull in worst-case $O(\log^2 n)$ time on each insertion/deletion of a point. Their method maintains the points of the upper convex hull\footnote{Note that \cite{OVERMARS1981166} considers left or right convex hulls with points ordered by $y$-coordinate, whereas here we consider upper convex hulls with points ordered by $x$-coordinate, which obviously does not affect the results.} sorted by $x$-coordinates, therefore tangent queries are supported in $O(\log n)$ time \cite{DBLP:journals/cacm/Preparata79}. 

We first give an overview of the data structure  of \cite{OVERMARS1981166} and then explain how their method can be adapted  to support our updates (i) and (ii) (Section~\ref{sec:ch-reduction}). The algorithm of \cite{OVERMARS1981166} maintains a balanced binary tree $\Tree$ where leaves hold points of the current set sorted by $x$-coordinate, and each internal node $\gamma$ of $\Tree$ is associated with an interval 
$\mathcal{I}_\gamma$ of points along the $x$-axis. 
A node $\gamma$ holds the upper convex hull $\Hull_{\gamma}$ of point set $\{(k,c[k]), k\in \mathcal{I}_\gamma\}$.   
Each $\Hull_{\gamma}$ is a list of points sorted by $x$-coordinate and implemented in a \textit{concatenable queue} data structure supporting splitting the list at a given point or concatenating two lists, each in $O(\log n)$ time. These operations can be implemented on  a search tree, such as a 2-3 tree (see e.g. \cite{10.5555/578775}). 

In \cite{OVERMARS1981166}, it is shown that the convex hull of a parent node can be computed from convex hulls of its child nodes in $O(\log n)$ operations. The sequence of points forming the convex hull of a parent node is obtained by concatenating some prefix of the sequence of the left child with some suffix of the sequence of the right child. As shown in \cite{OVERMARS1981166}, computing the ``break points'' in both sequences is done using $O(\log n)$ operations. 

Storing all convex hulls in all internal nodes, however, would be too expensive. A key idea of \cite{OVERMARS1981166} is to store, at each internal node, only the part of the sequence that does not contribute to the parent sequence. This way, each point belonging to some convex hull is stored only once, insuring a linear space for storing all convex hulls. Since splitting and concatenating is done in $O(\log n)$ operations, each convex hull can be reconstructed by traversing a path to this node, taking $O(\log^2 n)$ operations in total. Inversely, if e.g. the $y$-coordinate of a point is modified or a new point is inserted, the convex hulls along the corresponding path should be updated, which can be done in $O(\log^2 n)$ operations as well. 

Now we explain how we modify the data structure of ~\cite{OVERMARS1981166} to support our operations (i) and (ii). 
The main 
difference is that we do not store in $\Tree$ the $y$-coordinates of points: every point $(k,c[k])$ is represented in $\Tree$ by its $x$-coordinate $k$ only.  
$y$-coordinates $c[k]$ are stored in a 
separate dynamic  range tree $\Tree_c$. For every point $(k,c[k])$, there is a leaf of $\Tree_c$ that corresponds to it. Similar to $\Tree$, leaves of $\Tree_c$ are sorted by $x$-coordinates of points. Each time we need to retrieve the $y$-coordinate of some point $(k,c[k])$ we use $\Tree_c$ for this purpose. 

Operation (i) is implemented on $\Tree_c$ in the standard way by recording the increment in the root nodes of up to $\log n$ subtrees covering the interval $[\alpha_i..R]$. Values $c[k]$ are retrieved in time $O(\log n)$ by traversing the path to the leaf  that holds $k$ (lazy propagation). Operation (ii) (appending a new point) is supported in $O(\log n)$ time as well. 

To update $\Tree$ under operation (i), the key observation is that for nodes $\gamma$ such that $\mathcal{I}_\gamma\subseteq [1..\alpha_i -1]$ or $\mathcal{I}_\gamma\subseteq [\alpha_i..R]$, the convex hull does not change and therefore $\Hull_\gamma$ does not need to be modified. This is because operation (i) shifts all points with $x$-coordinates in $[\alpha_i..R]$ upward by $1$, which does not affect $\Hull_{\gamma}$ if $I_{\gamma}$ is contained in $[\alpha_i..R]$ or if $I_{\gamma}$ does not overlap with $[\alpha_i..R]$. 
Thus, the only convex hulls that should be updated are those for nodes $\gamma$ with $I_{\gamma}$ intersecting both $[1..\alpha_i-1]$ and $[\alpha_i..R]$. There are at most $\log n$ such nodes and they all belong to a single path in the tree, therefore these updates can be done in $O(\log^2 n)$ operations. Recall however that we have an additional $\log n$ factor coming from the fact that values $c[k]$ ($y$-coordinates of points) are retrieved in $O(\log n)$ time. Therefore, the overall time to support operation (i) is $O(\log^3 n)$. 

In case of operation (ii), we add a new leaf to $\Tree$ holding a new point $(R,c[R])$. Then we visit the ancestor nodes of the new leaf and update the convex hulls $H_{\gamma}$ in the same way as  described for operation (i). 

We conclude the discussion with the following
\begin{lemma}
	Operations (i) and (ii) can be supported in $O(\log^3 n)$ worst-case time. 
	\label{lemma:op-worstcase}
\end{lemma}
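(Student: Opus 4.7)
The plan is to prove the lemma by a cost accounting over the modified Overmars--van Leeuwen structure described above, establishing separately the per-operation bounds for $\Tree_c$ and for $\Tree$.

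First, I would verify that $\Tree_c$ itself supports (i) and (ii) in $O(\log n)$ worst-case time. For (i), I would invoke the standard lazy-propagation argument: the interval $[\alpha_i..R]$ decomposes into $O(\log n)$ canonical subtrees of $\Tree_c$, and incrementing a pending shift counter at each of their roots suffices. A query for a value $c[k]$ then traverses a single root-to-leaf path and sums the pending shifts encountered, taking $O(\log n)$. For (ii), appending a new leaf that stores the shifted value of $c[R-1]$ is likewise $O(\log n)$.

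Next, I would handle operation (i) on $\Tree$. The crucial observation to formalize is that a uniform vertical translation of a planar point set leaves the combinatorial structure of its upper convex hull invariant; hence, for any node $\gamma$ whose interval $\mathcal{I}_\gamma$ lies entirely inside $[\alpha_i..R]$ or entirely outside it, the stored $\Hull_\gamma$ needs no modification. The only nodes that may change are the $O(\log n)$ nodes whose interval straddles $\alpha_i$, and these all lie on a single root-to-leaf path of $\Tree$. For each such $\gamma$, I would rebuild $\Hull_\gamma$ from the hulls of its two children using the $O(\log n)$-step merge procedure of \cite{OVERMARS1981166}. Each step performs a constant number of geometric predicates, but each predicate now requires an $O(\log n)$-time $y$-coordinate lookup through $\Tree_c$. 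The per-node cost is therefore $O(\log^2 n)$, and the total cost of operation (i) is $O(\log^3 n)$. For operation (ii), inserting the new leaf affects only the $O(\log n)$ ancestors on the new root-to-leaf path, and the same merge analysis yields $O(\log^2 n)$ per node and $O(\log^3 n)$ in total.

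The main obstacle I anticipate is justifying that the Overmars--van Leeuwen merge still produces the correct $\Hull_\gamma$ when $y$-coordinates are fetched from $\Tree_c$ rather than stored alongside the $x$-coordinates in $\Tree$. This reduces to two clean observations: first, $\Tree_c$ always returns the current correct value of $c[k]$ in $O(\log n)$ time via lazy propagation, regardless of which pending shifts are outstanding; and second, translation-invariance of convex hulls guarantees that the merge predicates, evaluated on these up-to-date coordinates, pick the same break points as on the true shifted point set. Together with the path-length bound these two facts give the claimed $O(\log^3 n)$ worst-case time.
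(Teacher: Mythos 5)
Your proposal is correct and follows essentially the same approach as the paper: separate the $y$-coordinates into a lazy-propagation range tree $\Tree_c$ ($O(\log n)$ per update/lookup), observe that a uniform vertical shift leaves the combinatorial hull unchanged so only the $O(\log n)$ nodes on the root-to-leaf path straddling $\alpha_i$ require hull recomputation, and charge each Overmars--van~Leeuwen bridge-finding step an extra $O(\log n)$ for $y$-coordinate retrieval to reach $O(\log^3 n)$ overall. Your added remark that the merge predicates remain correct because $\Tree_c$ returns up-to-date values is a useful explicit justification that the paper leaves implicit, but it does not change the argument.
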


Tangent queries can be supported in $O(\log n)$ operations by a binary search algorithm \cite{DBLP:journals/cacm/Preparata79} and we have an additional $O(\log n)$ factor coming from the retrieval of $y$-coordinates from $\Tree_c$, resulting in $O(\log^2 n)$ time for querying the data structure for the tangency point. Taking into account that $\alpha_i$'s can be maintained in worst-case {$O(\log n)$ time} (see Section~\ref{sec:alphai}), Lemma~\ref{lemma:op-worstcase} implies the final result.

\begin{theorem}
	Normalized string complexity $\delta$ can be maintained online in $O(\log^3 n)$ worst-case time per character. 
	\label{thm:worst-case}
\end{theorem}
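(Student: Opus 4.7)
The plan is to assemble three pieces that have been developed above, one for each of the tasks the online algorithm must perform on reading a new character. On receiving $w[i]$, the algorithm has to (a) compute $\alpha_i$ so that the correct update type and range are determined, (b) execute the corresponding operation (i) or (ii) on the data structure that represents the point set $\Points=\{(k,c[k]), k=1..R\}$, and (c) obtain the current value of $\delta$ by finding the tangent from the origin to the upper convex hull of $\Points$, which by Lemma~\ref{lem3} identifies the leader and hence $\delta$.

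For (a) I would invoke the worst-case $O(\log n)$-per-character algorithm for $\alpha_i$ summarised in Section~\ref{sec:alphai}. For (b) I would apply Lemma~\ref{lemma:op-worstcase}, which provides a worst-case $O(\log^3 n)$ bound for both bulk-shift (i) and append (ii) using the modified Overmars–van Leeuwen tree $\Tree$ coupled with the range tree $\Tree_c$ that stores $y$-coordinates under lazy propagation. For (c) the binary-search tangent algorithm of~\cite{DBLP:journals/cacm/Preparata79} run on the root convex hull uses $O(\log n)$ comparisons; each comparison has to consult $\Tree_c$ to realise the actual $y$-coordinate of a candidate point on $\Hull_\gamma$, and each such lookup costs $O(\log n)$ because of the lazy increments on the ancestor path, giving $O(\log^2 n)$ per tangent query.

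Summing the three contributions, the per-character cost is dominated by (b), yielding the claimed $O(\log^3 n)$ worst-case bound. Correctness is inherited from Lemma~\ref{lem1}, Lemma~\ref{lem3}, and the fact that $\Tree$ and $\Tree_c$ jointly represent the current array $c[1..R]$ faithfully: $\Tree$ keeps only the $x$-coordinates and the partial convex hulls $\Hull_\gamma$, while $\Tree_c$ is consulted whenever a true $y$-coordinate is required in an update or a tangent comparison. Note that, unlike in Section~\ref{sec:ouralg}, there is no need to separate an ``active interval'' and track a global shift $\Delta$, since operation (i) is handled directly by the lazy increments in $\Tree_c$.

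The main obstacle I would expect is purely one of accounting. I would have to check carefully that the $O(\log n)$ $y$-coordinate lookups compose multiplicatively with the $O(\log^2 n)$ cost of an Overmars–van Leeuwen update, so that no additional logarithmic factor slips in from the splits and concatenations of the concatenable queues along the root-to-leaf path, and that lazy-propagation increments on $\Tree_c$ behave correctly when a coordinate is queried inside a node $\gamma$ whose interval $\mathcal{I}_\gamma$ only partially overlaps the pending update range $[\alpha_i..R]$. Once this bookkeeping is in place, the theorem follows by combining the three bounds above.
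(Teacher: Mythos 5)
Your proposal is correct and follows essentially the same route as the paper: combine the worst-case $O(\log n)$ computation of $\alpha_i$, the $O(\log^3 n)$ bound of Lemma~\ref{lemma:op-worstcase} for operations (i)/(ii) via the modified Overmars--van Leeuwen tree plus the lazy-propagation range tree $\Tree_c$, and the $O(\log^2 n)$ tangent query. Your side remark that the active-interval/$\Delta$ bookkeeping of Section~\ref{sec:ouralg} becomes unnecessary here is also accurate and matches the paper's (implicit) treatment.
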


\section{Concluding remarks}

We showed how the normalized string complexity $\delta$ can be maintained online in $O(\log n)$ amoritized time and $O(\log^3 n)$ worst-case time per character. Note that for the first bound, the ``bottleneck'' is in supporting tangent queries, as the method of \cite{BrewerBW24} supports updates in constant time (see Theorem~\ref{thm:brewer}). This suggests that the $O(\log n)$ bound might not be optimal. 
On the other hand, for the $O(\log^3 n)$ worst-case bound, the bottleneck is in supporting the updates in worst-case time, and in particular, supporting a constant upshift on an interval of points. {We consider that efficiently maintaining the convex hull under this type of updates deserves to be further studied as well.} 

It is interesting to relate our results to those of \cite{DBLP:conf/waoa/Whittington24} about online computation of the smallest \textit{attractor} of a string. An attractor is a set of positions of the string such that at least one occurrence of every \textit{distinct} substring is hit by one of those positions \cite{10.1145/3188745.3188814}. It has been shown that the size $\gamma$ of the smallest attractor is another useful compressibility measure and has a deep relationship to many dictionary-based compressors (cf Introduction) \cite{10.1145/3188745.3188814,10.1145/3426473,DBLP:journals/tit/KociumakaNP23}. In particular, it is easy to show that $\delta\leq \gamma$ and, on the other hand, there are string families such that $\gamma = \Omega(\delta\log(n/\delta))$ \cite{DBLP:journals/tit/KociumakaNP23}. In contrast to $\delta$ which can be computed in linear time, computing $\gamma$ is an NP-complete problem \cite{10.1145/3188745.3188814}. In \cite{DBLP:conf/waoa/Whittington24}, it was shown that \textit{any} online algorithm has a competitive ratio $\Omega(\log n)$, i.e. can compute $\gamma$ only within an $\Omega(\log n)$ factor, in the worst case. On the other hand, this bound is matched by a greedy algorithm. In the context of these results, our work shows that, in accordance with the offline case, $\delta$ can be efficiently computed in the online mode as well.

\bibliography{biblio.bib}

\section*{Appendix: Proof of Theorem~\ref{th2}}
\renewcommand{\thetheorem}{\ref{th2}}
\begin{theorem}
	$\sum_{i\,|\,\alpha_{i}\leq \alpha_{i-1}}\alpha_i=O(n\log n)$. 
\end{theorem}
\begin{proof}
	Consider a pullback step $i$. At the previous step $i-1$, $w[i-\alpha_{i-1}..i-1]$ is the shortest suffix that does not occur earlier, and $w[i-\alpha_{i-1}+1..i-1]$ is the longest suffix that does occur earlier. Note that that its any earlier occurrence must be preceded by another letter than $w[i-\alpha_{i-1}]$, i.e. $w[i-\alpha_{i-1}+1..i-1]$ is left-special and occurs with a new left context\footnote{The earlier occurrence can also be a prefix, therefore we assume that any prefix is preceded by a special letter that does not occur in the rest of $w$. In other words, we assume that $w$ starts with a special unique letter.}. When letter $w[i]$ is processed, a pullback occurs when $\alpha_i\leq \alpha_{i-1}$ holds, that is, when the shortest unique suffix is strictly shorter than $\alpha_{i-1}+1$. This implies that $w[i-\alpha_{i-1}+1..i]$ does not occur earlier, that is any earlier occurrence of $w[i-\alpha_{i-1}+1..i-1]$ is followed by another letter than $w[i]$. Therefore, $w[i-\alpha_{i-1}+1..i-1]$ is right-special and occurs with a new right context. We conclude that $w[i-\alpha_{i-1}+1..i-1]$ is repeated and has both left and right contexts that are new. 
	
	In~\cite{DBLP:conf/cpm/KarkkainenMP09} the authors study \textit{irreducible LCP} (longest common prefix) values and prove that their sum is bounded by $2n\log n$. An LCP value is the length of the longest common prefix between a suffix and the lexicographically preceding one. 
	An LCP value is irreducible if these two suffixes are preceded by distinct letters. This implies that if step $i$ is a pullback, then it generates a new irreducible LCP value $(\alpha_{i-1}-1)$ which is the LCP between the suffix $w[i-\alpha_{i-1}..i]$ and a lexicographically adjacent suffix starting with one of the preceding copies of $w[i-\alpha_{i-1}..i-1]$. 
	
	Finally we need to show that when the string will be further extended, there always be a distinct LCP value for every $\alpha_{i}$ resulting from a pullback. 
	Let $v=w[i-\alpha_{i-1}..i-1]$. After the pullback, the new irreducible LCP value is formed by two lexicographically adjacent suffixes starting with $vx_1$ and $vx_2$ for letters $x_1\neq x_2$ and preceded by letters $y_1$ and $y_2$, $y_1\neq y_2$, respectively. When the string is extended, it can happen that a new suffix gets inserted between these two suffixes, that might potentially cancel or modify this irreducible LCP value so that $|v|=(\alpha_{i-1}-1)$ would not be properly accounted for in the final sum of irreducible LCP values. We show that this is not the case. 
	
	Assume that a new suffix gets inserted between the two suffixes starting with $vx_1$ and $vx_2$. Obviously, this suffix starts with $v$ and is followed by a letter which is either $x_1$ or $x_2$, or a letter that is between them in the lexicographical order. Consider the letter $z$ preceding this suffix in the string. If $z$ is different from both $y_1$ and $y_2$, then two new irreducible LCP values replace the former one. One of them may correspond to a new pullback, but the other one is necessarily larger than or equal to $|v|$. If $z=y_1$ or $z=y_2$, then the new middle suffix cannot result from a pullback. On the other hand, the three suffixes form one irreducible LCP value which is again larger than or equal to $|v|$. By induction, we conclude that each $\alpha_i$ resulting from a pullback is properly charged on a distinct LCP value in the final sum of irreducible LCP values. 
	The theorem follows. 
\end{proof}

Note that the sum of irreducible LCP values can be bounded more precisely in terms of compression measures: it is proven in \cite{9317909} that this sum is $O(n\log\delta)$. 
Thus, the proof of Theorem~\ref{th2} implies $\sum_{i\,|\,\alpha_{i}\leq \alpha_{i-1}}\alpha_i=O(n\log \delta)$.
\end{document}